\def\cite{\citep}
\newtheorem{theorem}{Theorem}
\newtheorem{claim}[theorem]{Conjecture}
\newcommand{\ZZ}{\ensuremath{\mathbb{Z}}}
\newcommand{\softoh}{\ensuremath{\widetilde{O}}}
\title{Parallel sparse interpolation\\ using small primes}
\author{Mohamed Khochtali}
\author{Daniel S.\ Roche}
\author{Xisen Tian}
\affil{\small Computer Science Department\\United States Naval Academy\\
Annapolis, Maryland, USA}
\begin{document}

\maketitle

\begin{abstract}
  To interpolate a supersparse polynomial with integer coefficients, two
  alternative approaches are the Prony-based ``big prime'' technique,
  which acts over a single large finite field, or the more
  recently-proposed ``small primes'' technique, which reduces the unknown
  sparse polynomial to many low-degree dense polynomials.
  While the latter technique
  has not yet reached the same theoretical efficiency as Prony-based
  methods, it has an obvious potential for parallelization. We present a
  heuristic ``small primes'' interpolation algorithm and report on a
  low-level C implementation using FLINT and MPI.
\end{abstract}

\section{Introduction}

Given a way to evaluate or \emph{sample} an unknown function or
procedure, interpolation is the fundamental and important problem of
recovering a formula which accurately and completely describes that
unknown function. As discovering an arbitrary unknown function from a
finite set of evaluations with any reliability would be impossible, some
constraints on the size and form of the output are inevitably required.

Here we consider the problem of \emph{sparse polynomial interpolation},
in which we are guaranteed that the unknown function is a multivariate
polynomial with bounded degree. Sparse interpolation algorithms date to
the 18th century, but have been the focus of considerable recent work in
numeric and symbolic computation,
with applications ranging from power consumption in medical devices, 
to reducing intermediate expression swell in mathematical computations
\cite{CL08,Kal10a,KLY11,BCK12}.

Specifically, this work focuses on algorithms to interpolate an unknown
\emph{supersparse} polynomial with integer coefficients, which make
efficient use of modern \emph{parallel} computing hardware. Focusing on
the ``supersparse'' (a.k.a. ``lacunary'') case means that our running
time will be in terms of the number of variables, number of nonzero
terms, and the logarithm of the output degree. 

The first algorithms to solve this problem in polynomial-time were based
on the exponential sums technique of Prony, and can efficiently solve
the integer problem by working in a large finite field modulo a single
``big prime.'' A number of theoretical improvements and practical
implementations have been made in this vein, including work on fast
parallel implementations \cite{JM10,HL15}.

We consider another type of approach, whereby the unknown sparse
integer polynomial is reduced in degree modulo many small primes. This
technique, first used by Garg and Schost to avoid the need for discrete
logarithm computations in arbitrary finite fields, can also be applied
to the integer polynomial case. While it cannot yet match the
theoretical efficiency of the big primes algorithms, we will show that the
``small primes'' method is very effectively parallelized. Furthermore, we
develop a practical heuristic version of this method which reduces
further the
size and number of primes required based on experimental results rather
than on the theoretical worst-case bounds.

\subsection{Related work}

While it would be impossible to list all of the related work on sparse
interpolation, we will mention some of the most recent results which are
most closely connected to the current study, and which may provide the
reader with useful background.

The now-classical approach to sparse interpolation is variously credited
to Prony, Blahut \cite{Bla79}, or Ben-Or and Tiwari \cite{BT88}. Only
the last of these considered explicitly the case of integer
coefficients, but all share the key property of requiring the minimal
number $O(T)$ of evaluations in order to recover a polynomial with 
$T$ nonzero terms.

We refer to these approaches as ``big prime'' techniques, as the more
modern variants \cite{KY89,Kal10a} adapt to the case of integer
coefficients by choosing carefully a single large modulus, then perform
the interpolation over a finite field in order to avoid exponential
growth in the bit-length of evaluations.

The approach which we take in this work is based on that of Garg and
Schost \cite{GS09}, who developed the first polynomial-time supersparse
interpolation algorithm over an arbitrary finite field. By reducing the
unknown polynomial modulo $(z^p-1)$, the full coefficients
are discovered immediately, but the exponents are only discovered after
repeating for multiple values of $p$. This ``small primes'' approach,
described in more details below, has the considerable advantage of
relying only on low-degree, dense polynomial arithmetic.

There are, of course, other sparse interpolation methods which do not
fit nicely into this big/small prime characterization. Most notable are
Zippel's algorithm and hybrid variants of it \cite{Zip90,KL03}, the
symbolic-numeric method of \cite{Man95}, and the Newton-Hensel
lifting approach of \cite{AKP06}.

We point out some recent work on efficient implementations which are of
particular interest to the current study. In \cite{JM10}, a new variant
of the big prime approach is developed which can be performed variable
by variable, in parallel. More recently, \cite{HL15} investigated a
number of tricks and techniques towards practical, efficient sparse
interpolation, and posed some new benchmark problems. Their methods are
also based on the big prime approach; to our knowledge there has been no
reported implementation work on the small primes technique other than
the numerical interpolation code reported in \cite{GR11a}.

\subsection{Our contributions}

Suppose $f\in\ZZ[x_1, x_2,\ldots x_n]$ is an unknown polynomial in $n$
variables, with partial degrees less than $D$ in each variable,
at most $T$ nonzero terms, and coefficients less than $H$ in absolute
value. Then $f$ can be written as
\[f = \sum_{i=1}^T c_i x_1^{e_{i1}}x_2^{e_{i2}}\cdots{}x_n^{e_{in}},\]
where each exponent $e_{ij}< D$ and each $|c_i| < H$.

Given a way to evaluate $f(\theta_1,\ldots,\theta_n)\bmod q$, for any
modulus $q\in\ZZ$ and $n$-tuple of evaluation points
$(\theta_1,\ldots,\theta_n) \in (\ZZ/q\ZZ)^n$,
the \emph{sparse integer polynomial interpolation problem} is to
determine the coefficients $c_i$ and exponent tuples 
$(e_{i1},\ldots,e_{in})$, for each $1\le i\le T$.

We will actually consider a slight relaxation of this problem, wherein
evaluations are of the form
$$(q,p,d_1,\ldots,d_n)
  \mapsto f(z^{d_1},\ldots,z^{d_n}) \bmod (z^p-1) \in (\ZZ/q\ZZ)[z].$$
That is, the $n$ variables are replaced by a single indeterminate $z$,
all coefficients are reduced modulo $q$, and the exponents are
reduced modulo $p$. The reduction modulo $(z^p-1)$ 
is possible without affecting the overall
complexity whenever the unknown polynomial $f$ is given as a
straight-line program or algebraic circuit, or if the prime $q$ is
chosen so that $\ZZ/q\ZZ$ has a $p$th root of unity.

An algorithm for this problem is said to handle the \emph{supersparse} case
if it requires a number of evaluations and running time which are
polynomial in $n$, $T$, $\log D$, and $\log H$. This corresponds to the size of
the sparse representation of $f$ as a list of coefficient-exponent
tuples, which requires
$O(nT\log D + T\log H)$
bits in memory.

We present a randomized algorithm for the sparse integer polynomial
interpolation problem, derived from the existing literature, whose
running time%
\footnote{
  Here and throughout we use the \emph{soft-oh} notation in order to
  simplify the stated running times: a running time is said to be 
  $\softoh(\phi)$ for some function $\phi$ if and only if it is
  $O(\phi (\log \phi)^{O(1)})$.
} for provable correctness is
\begin{equation}\label{eqn:rtg}
\softoh\left(
\left(1 + \tfrac{1}{m}n\log D\right)
n T \log D (\log D +\log H)\right),
\end{equation}
where $m \ge 1$ is the number of parallel processors available for the
task. Furthermore, we demonstrate a heuristic variant on this algorithm,
which works well in our experimental testing, and reduces the running
time further to
\begin{equation}\label{eqn:rth}
\softoh\left(
  n\log D\log H
  + \left(1 + \tfrac{1}{m}n\log D\right) T \log H
  \right)
\end{equation}
in the typical case that $n$ and $\log D$ are both $O(T\log H)$.

We have implemented this heuristic approach using the C library FLINT
for dense polynomial arithmetic and MPI for parallelization. Our
experiments demonstrate the smallest effective settings for the
parameters in our heuristic approach. With those parameters, we show
that the heuristic method is competitive with the state of the art in
the single-processor setting, and that its running
time scales well with increasing numbers of parallel processors.

Specifically, our contributions are:
\begin{enumerate}
  \item A sparse interpolation algorithm whose potential parallel
    speedup is $O(n\log D)$, compared with the $O(n)$ parallel speedup
    that has been shown in previous work for other sparse interpolation algorithms.
  \item A heuristic variant of our algorithm, which is demonstrated to
    be effective
    on our (limited) random experiments, that brings the
    running time complexity of the small primes approach to be
    competitive with that of the big prime approach.
  \item An efficient C implementation of our interpolation algorithm
    which demonstrates its competitiveness on a standard benchmark
    problem.
\end{enumerate}

\subsection{Organization of the paper}

We first outline the two main classes of existing approaches to
supersparse integer polynomial interpolation, which we call the ``big
prime'' and the ``small primes'' methods, in Section~\ref{sec:exist}. We
then present our own heuristic method in Section~\ref{sec:ouralg}, 
which is based
on previously-known ``small primes'' algorithms, but goes beyond
theoretical worst-case bounds on the sizes needed in order to further
improve the efficiency.

Section~\ref{sec:impl} reports the details of our parallel
implementation of this heuristic method, and Section~\ref{sec:exper}
presents the preliminary experimental results which demonstrate the
efficacy of this approach. Finally, we state some conclusions and directions for
further investigation in Section~\ref{sec:conc}.

\section{Existing algorithms for supersparse interpolation}
\label{sec:exist}

\subsection{``Big prime'' methods}

The original sparse interpolation 
algorithm of \cite{BT88} used evaluations at powers of the first $n$
prime numbers along with Prony's method to deterministically recover the
nonzero integer coefficients and exponents of an unknown polynomial.
This cannot be considered a ``supersparse'' algorithm, as it
performs the evaluations over the integers directly and
requires working with integers with more than $D$ bits.

However, it was soon recognized that, by choosing a single large prime
$p \ge D^n$, with $p-1$ smooth%
\footnote{
  An integer is said to be \emph{smooth} if it has only small prime
  factors.
} so as to facilitate discrete logarithms,
a supersparse integer polynomial could be interpolated efficiently
modulo $p$ \cite{KLW90,Kal10a}. The basic steps of this approach are as
follows:

\begin{enumerate}
  \item Choose prime $p$ so that $(p-1)$ has a large ``smooth''
    factor greater than $D^n$ and let $\theta$ be a primitive element
    modulo $p$.
  \item For $i=0,1,2,\ldots,2T-1$, evaluate
    \[v_i = f(\theta^i, \theta^{Di}, \ldots, \theta^{D^{n-1}i}) \bmod p.\]
  \item Compute the minimum polynomial $\Gamma\in\ZZ_p[z]$ of the 
    sequence $(v_i)_{i\ge 0}$ with the Berlekamp-Massey algorithm.
  \item Factor $\Gamma$ over $\ZZ_p[z]$; each root of $\Gamma$ can be
    written as
    $\theta^{e_1 + e_2 D + \cdots + e_n D^{n-1}}$, corresponding to
    a single term $c x_1^{e_{i1}}\cdots x_n^{e_{in}}$ of $f$.
  \item Compute $T$ discrete logarithms of the roots, and then the
    $D$-adic expansion of each one, to discover the actual exponents
    $(e_{i1},\ldots,e_{in})$, for $1\le i\le T$.
  \item Once the exponents are known, the coefficients can be computed
    from the evaluations $v_i$ by solving a transposed
    Vandermonde system of dimension $T$.
\end{enumerate}

The two steps which can be trivially parallelized are the evaluations
and discrete log computations on steps 2 and 5. However, the dominating
cost in the complexity is factoring a polynomial over $\ZZ_p[x]$ on step
4. This is also confirmed to be the dominating cost in practice by
\cite{HL15}, and it is not clear how to efficiently parallelize the
factorization. Using the fastest known algorithms, the running time of
this step is $\softoh(T\log^2 p)$ \cite{GHL15}, which is at least 
$\softoh(n^2 T \log^2 D)$ bit operations from the condition
$p > D^n$.

In the description above, the evaluations at powers of $\theta$ on step 2 amount to a
Kronecker substitution from multivariate to univariate. Of note is
the algorithm of \cite{JM10}, which uses a different approach than
Kronecker substitution in order to work one variable at a time, gaining
a potential $n$-fold parallel speedup.

\subsection{``Small primes'' methods}

The algorithm described above works the same over an arbitrary finite
field, except that the discrete logarithms required in step 5 cannot be
performed in polynomial-time in general. This difficulty was first
overcome in \cite{GS09}, where the idea is as follows:

\begin{enumerate}
  \item Choose a series of small primes $p_1, p_2, \ldots$
  \item Apply the Kronecker substitution and for each $i=1,2,\ldots$
    compute $f_i = f(z,z^D,\ldots,z^{D^{n-1}}) \bmod (x^{p_i}-1)$.
  \item Each $f_i$ of maximal sparsity contains all the coefficients of
    $f$, and all the exponents modulo $p_i$. Collect sufficiently many
    modular images of the exponents in order to recover the full
    exponents over $\ZZ$.
  \item Recover the multivariate exponents by $D$-adic expansion of each
    univariate exponent, and use any $f_i$ of maximal sparsity to
    discover the coefficient of each term.
\end{enumerate}

There are two significant challenges of this approach. The first is
that, in reducing the polynomial modulo $x^{p_i}-1$, it is possible that
some exponents are equivalent modulo $p_i$, and then multiple terms in
the original polynomial will \emph{collide} to form a single term in
$f_i$. By choosing random primes whose values are roughly
$\softoh(T^2\log D)$, the probability of encountering any collisions can
be made arbitrarily small.

The second challenge is how to correlate the exponents from different
$f_i$'s in step 3, in order to recover the full exponents via Chinese
remaindering. The approach of \cite{GS09} was to compute an auxiliary
polynomial whose roots are the unknown exponents; however this
increases the overall running time to $\softoh(n^2 T^3 \log^2 D \log
H)$.

The technique of \emph{diversification} in \cite{GR11a} is another
randomization which chooses a random element $\alpha$ and interpolates
$f(\alpha z)$ instead of $f(z)$ itself. With high probability, the
``diversified'' polynomial $f(\alpha z)$ has distinct coefficients,
which can then be used to correlate the exponents in different $f_i$'s.
This avoids the factoring step and reduces the complexity by a factor of
$T$.

Subsequently, and separately, \cite{AGR13} showed how to allow the
magnitude of each $p_i$ to decrease by a factor of $T$, by allowing some
constant fraction of the terms in each $f_i$ to collide. These separate
approaches are combined and further improved in \cite{AGR15},
which in the typical case that
$n \ll \log D \ll \log H$, brings the theoretical complexity down to
$\softoh(nT \log^2 D \log H)$. Observe that this is competitive with the
``big primes'' algorithm, but could be slower by up to a factor of
$\log H$. 

The algorithm we describe next first shows how to effectively
parallelize this small primes approach, and then further reduces the
complexity through a heuristic argument. After the gains from
parallelization or from the heuristic improvement, the complexity of our
algorithm will be less than the ``big primes'' approach as well.

\section{Our parallel small primes algorithm}
\label{sec:ouralg}

\begin{procedure}[tb]
\caption{SparseInterp($f, n, T, D, H, k, \ell$)\label{proc:interp}}
\KwIn{Bounds $T, D, H$ for an $n$-variate sparse polynomial $f$ in $n$
variables, with $T$ nonzero terms, partial degrees less than $D$, and
integer coefficients less than $H$ in absolute value; and parameters
$k,\ell,Q\in\ZZ_{>0}$.}
\KwOut{A list of $t\le T$ coefficients $c_i$ and exponent tuples
$(e_{i1},\ldots,e_{in})$. If $k,\ell$ are sufficiently large, these
coefficients and exponents comprise the sparse representation of $f$
with high probability.}

$q \gets$ random prime in the range $[Q, 2Q]$ \;
$\alpha \gets$ random element of $(\ZZ/q\ZZ)^*$ \;
$(\alpha_0,\ldots,\alpha_{n-1}) \gets 
  (\alpha, \alpha^D\bmod q, \ldots, \alpha^{D^{n-1}}\bmod q)$ \;
$\lambda \gets kT$ \;
$\mu \gets \lceil (\ell n \lg D)/\lg \lambda \rceil$ \;
$L \gets$ thread-safe list of integer triples \;
\For{$i=1,2,\ldots,\mu$ in parallel}{
  $p_i \gets $ random prime in the range $[\lambda,2\lambda]$ \;
  $(D_0,\ldots,D_{n-1}) \gets (1, D\bmod p_i, \ldots, D^{n-1}\bmod p_i)$ 
    \label{step:Dpow} \;
  $f_i \gets f(\alpha_0 z^{D_0}, \ldots, \alpha_{n-1} z^{D_{n-1}})
    \bmod (z^{p_i}-1)
    \in (\ZZ/q\ZZ)[z]$ using dense polynomial arithmetic \label{step:eval} \;
  \For{$j=0,1,\ldots,p_i-1$}{
    $c_{ij} \gets $ coefficient of $z^j$ in $f_i$ \;
    \lIf{$c_{ij} \ne 0$}{Add $(c_{ij}, j, p_i)$ to $L$
    \label{step:growL}}
  }
}
Sort $L$ in parallel by the coefficients $c_{ij}$ in each triple \;
$F \gets$ thread-safe list of coefficient/exponent tuples \;
\ForEach{Unique coefficient $c$ in $L$ in parallel}{
  $(d_{c1},p_{c1}),\ldots,(d_{cu},p_{cu}) \gets$ 
    the $u\ge 1$ exponents and primes appearing with coefficient $c$
    in $L$ \;
  \If{$u \ge \mu/2$}{
    $u' \gets $ least integer s.t.\ $\prod_{1\le i \le u'} p_{ci} \ge D^n$ \;
    $E_c \gets$ least integer s.t.\ 
      $E_c \equiv d_{ci}\bmod p_{ci}$ for $0\le i\le u'$ via
      Chinese remaindering \;
    $c^* \gets c \alpha^{-E_c} \in \ZZ/q\ZZ$, stored as an integer in the range
      $[-q/2,q/2]$ \;
    $(e_1,\ldots,e_n) \gets $ $D$-adic expansion of $E_c$ \;
    Add $c^*$ and $(e_1,\ldots,e_n)$ to $F$\;
  }
}
Sort $F$ in parallel by the exponents and \textbf{return} the resulting sparse
polynomial
\end{procedure}

Our algorithm, which is detailed in procedure \ref{proc:interp},
is based on the reduction idea in \cite{GS09}, with the
diversification method introduced by \cite{GR11a} and the partial
collisions handling of \cite{AGR13}. It depends crucially on the
parameters $k$ and $\ell$, which in theory grow as $O(n\log D)$ and
$O(1)$, respectively, but according to our heuristic method can both be
treated as constants.

\subsection{Algorithm overview}

The main idea is first to choose a prime $q$ large enough to recover the
coefficients, and then to apply the Kronecker substitution so that we
are really interpolating a univariate polynomial $g(z)$
with coefficients in $\ZZ/q\ZZ$:
\[g = f(\alpha z,(\alpha z)^D,\ldots,(\alpha z)^{D^{n-1}})
= \sum_{i=1}^T c_i z^{E_i} \in (\ZZ/q\ZZ)[z].\]
Each term $c^*x_1^{e_1}\cdots x_n^{e_n}$ of the original polynomial $f$
maps uniquely to a term with exponent
$E_i = e_1 + e_2 D + \cdots + e_n D^{n-1}$ and coefficient
$c_i = c^*\alpha^{E_i} \bmod q$ in $g$.

The parameter $Q$ controls the size of the prime $q$, and so should
always be set larger than $2H$ in order to recover the full precision of
the coefficients. However, it may be necessary to set $Q$ even larger
than this when the height bound $H$ is very small. We comment that
choosing a random prime $q$ (rather than one with special properties, as
in the ``big primes'' algorithm) is important for the probabilistic
analysis below.

The evaluation phase of the algorithm computes the polynomial
$g$ modulo $(z^p-1)$ using
dense arithmetic, for many small primes $p$. The details of 
how this evaluation is performed
will depend on the particular application. In our
multivariate multiplication application below, the exponents of the
original multiplicands are reduced modulo $p$, followed by dense
polynomial arithmetic in the ring of polynomials modulo $z^p-1$. More
generally, if the unknown polynomial $f$ is given as a straight-line
program or arithmetic circuit, each operation in the circuit can be
computed over that ring $(\ZZ/q\ZZ)[z]/\langle z^p-1\rangle$, as in
\cite{GS09}. In our complexity analysis below, we assume a cost of
$\softoh(p\log q)$ for each evaluation on
this step, according to the cost of dense
degree-$p$ polynomial arithmetic over $\ZZ/q\ZZ$.

The next phase of the algorithm is to gather the images of each term,
discard those which appear infrequently (and thus were resulting from
collisions in the reduction modulo $z^p-1$), and use Chinese
remaindering to recover the exponents of each nonzero term in $f$.

Observe that the list $L$ can be implemented in any convenient way
according to the details of the parallel implementation; it serves only
as an unordered accumulator of coefficient-exponent-prime tuples.
The output polynomial $f$ could also be considered as an unordered
accumulator, followed by another parallel sort before the final return
statement.

\subsection{Parallel complexity analysis}

We state the parameterized running time of the algorithm as follows:

\begin{theorem}\label{thm:rtime}
  Given bounds $T,D$ on the sparsity and degree of an unknown
  polynomial $f\in\ZZ[x_1,\ldots,x_n]$, and parameters $k,\ell,Q$,
  Algorithm \ref{proc:interp} has worst-case running time
  \begin{align*}
    \softoh\bigg(&\log \ell +
    n \log D \log Q +
    k T \log Q \\
    &+ \left(\tfrac{1}{m}\ell n \log D\right)\left(
      n + \log D + k T \log Q
    \right)
  \bigg).
  \end{align*}
\end{theorem}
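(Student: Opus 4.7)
The plan is to charge each phase of procedure \ref{proc:interp} against one of the four summands in the claimed bound, using the identity $\mu\log\lambda = \Theta(\ell n\log D)$ which follows from $\lambda = kT$ and $\mu = \lceil \ell n\lg D/\lg\lambda \rceil$. Throughout, any stray $\log\lambda$ factors appearing in numerators or denominators will be absorbed by $\softoh$. The summands will turn out to correspond, respectively, to: the small overhead of computing $\mu$ itself (the $\log\ell$ term), the sequential setup, the depth of one parallel iteration, and the total work of the parallel loop divided by $m$; the two parallel sorts and the CRT phase will be shown to be subsumed.

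For the sequential setup, sampling $q$ and $\alpha$ costs $\softoh(\log Q)$, and building the list $\alpha,\alpha^D,\ldots,\alpha^{D^{n-1}}\bmod q$ requires $n-1$ modular exponentiations on $\log Q$-bit integers with exponents of $\log D$ bits, giving $\softoh(n\log D\log Q)$. One iteration of the parallel loop samples $p_i$ in $\softoh(\log\lambda)$, builds the table $D^j\bmod p_i$ by iterated modular multiplication in $\softoh(\log D+n\log\lambda)$, performs the dense evaluation at cost $\softoh(p_i\log q)=\softoh(kT\log Q)$ by the stated cost model, and extracts coefficients in $\softoh(kT)$. The depth of the loop is therefore $\softoh(kT\log Q)$ after absorbing $\log\lambda$ and bounding $n\log\lambda \le n\log D$, which accounts for the standalone $kT\log Q$ summand. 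Summing over $\mu$ iterations gives total work $\softoh(\mu(n\log\lambda+\log D+kT\log Q))$, and substituting $\mu\log\lambda=\Theta(\ell n\log D)$ yields $\softoh(\ell n\log D\cdot(n+\log D+kT\log Q))$; dividing by $m$ recovers the final summand.

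For the remaining pieces, the two parallel sorts cost roughly $\softoh(|L|\log Q/m)$ and $\softoh(|F|\cdot n\log D/m)$; since $|L|\le 2\mu\lambda=\softoh(\ell nkT\log D)$ and $|F|\le T$, both are subsumed by the $\frac{1}{m}\ell n\log D\cdot kT\log Q$ term. The CRT phase is parallel over at most $T$ distinct coefficients; each performs CRT on $u'=O(n\log D/\log\lambda)$ residues of $\log\lambda$ bits (cost $\softoh(n\log D)$), one modular exponentiation $\alpha^{-E_c}$ with exponent below $D^n$ (cost $\softoh(n\log D\log Q)$), and a $D$-adic expansion, for a parallel cost of $\softoh(Tn\log D\log Q/m)$ that is again absorbed because $\ell k\ge 1$. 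The main obstacle is essentially bookkeeping: keeping track of where the $\log\lambda$ factors appear and verifying each one is hidden inside $\softoh$, and recognizing that once $m$ exceeds $\mu$ the parallel running time is bounded below by the depth of a single dense evaluation, which is precisely what forces the standalone $kT\log Q$ term into the theorem.
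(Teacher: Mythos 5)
Your proposal is correct and follows essentially the same route as the paper's proof: charge the $\softoh(\log\ell)$ to computing $\mu$, bound each parallel iteration by the dense evaluation cost $\softoh(p_i\log q)=\softoh(kT\log Q)$ plus the $\softoh(n+\log D)$ for the power table, multiply by $\lceil\mu/m\rceil$ iterations with $\mu=O(\ell n\log D)$, and show the sorts and the CRT/exponentiation phase are dominated (the paper, like you, absorbs the $(1/m)Tn\log D\log Q$ term into the evaluation cost). The only quibbles are cosmetic: the inequality $n\log\lambda\le n\log D$ need not hold, but is unnecessary since $\log\lambda$ is absorbed by $\softoh$ anyway, and the expensive branch of the final loop fires for $O(T)$ coefficients (those with $u\ge\mu/2$) rather than the loop ranging over only $T$ distinct coefficients---neither affects the bound.
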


\begin{proof}
  The computation of $\mu$ at the beginning incurs the $\softoh(\log \ell)$
  cost in the complexity, which for our ultimate choices of $\ell$ will
  never actually dominate the complexity.

  The first loop executes 
  $\lceil \mu/m \rceil \in \softoh(1 + \tfrac{1}{m}\ell n \log D)$
  times in each thread.
  Computing the powers of $D$ modulo $p_i$ on Step~\ref{step:Dpow}
  requires $\softoh(\log D + n\log p_i)$ bit operations. The subsequent
  evaluations using dense arithmetic on Step~\ref{step:eval} will
  usually dominate the complexity of the entire algorithm, as each costs
  $\softoh(p_i \log q)$, which is $\softoh(k T \log Q)$.
  (The addition of this term makes the $\log p_i$ factor in the cost of
  Step~\ref{step:Dpow} become $\softoh(1)$.)

  Both parallel sorts are on lists of size at most $\mu T$, which means
  their cost of $\softoh(\tfrac{1}{m} \ell n T \log D)$ does not
  dominate the complexity.

  The final for loop executes $O(1 + \tfrac{1}{m} \mu T)$ times in each
  thread, but the nested if statement can only be triggered $O(T)$ times
  overall. Within it, the most expensive step is computing each
  $\alpha^{-E_c} \bmod q$, requiring $\softoh(n\log D \log Q)$
  bit operations. This contributes only
  $\softoh(n\log D \log Q)$ to the overall complexity, as
  the term $\frac{1}{m}T n\log D \log Q$ is already dominated by the
  parallel cost of Step~\ref{step:eval}.
\end{proof}

The key feature of procedure \ref{proc:interp} is that its potential
parallel speedup, from the previous theorem, is a factor of 
$O(\ell n \log D)$, depending on the number of parallel processors $m$
that are available. This exceeds the $O(n)$ parallel speedup of previous
approaches, and means that our algorithm should scale better to a large
number of processors when the number of variables and/or degree are
sufficiently large.

\subsection{Correctness and probability analysis}

We first use prior work to prove the bounds necessary to ensure
correctness with provably high probability. 

\begin{theorem}\label{thm:bounds}
  If the parameters $k,\ell,Q$ and bounds $D,T,H$ satisfy 
  \begin{align*}
    Q &\ge \max\left(2H, \tfrac{1}{4}(\ell n T \lg D)^2 D\right),  \\
    k &\ge \max(21,\lceil20 n \ln D\rceil), \text{and} \\
    \ell &\ge 2,
  \end{align*}
  then with
  probability at least $1/2$, procedure \ref{proc:interp} correctly
  computes the coefficients and exponents of $f\in\ZZ[x_1,\ldots,x_n]$.
\end{theorem}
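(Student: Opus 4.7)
The plan is to identify three failure modes of procedure~\ref{proc:interp} and bound each of their probabilities using the stated hypotheses on $Q$, $k$, and $\ell$, then apply a union bound to conclude that the total probability of failure is at most $1/2$. The three events are: (a) two true terms of $f$ produce the same diversified coefficient $c_i\alpha^{E_i}\bmod q$, making them indistinguishable in the accumulator $L$; (b) some true term \emph{collides} (shares a reduced exponent with another term) in more than $\mu/2$ of the prime reductions, so that its true coefficient value appears with count below the $\mu/2$ threshold and is discarded; and (c) a spurious coefficient value, arising from collisions of multiple terms under reduction, appears at least $\mu/2$ times and thereby yields a phantom term in the output.

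For (a) I would invoke the diversification analysis of~\cite{GR11a}: for each of the $\binom{T}{2}$ pairs, the equation $c_i\alpha^{E_i}\equiv c_j\alpha^{E_j}\pmod q$ is a low-degree relation in $\alpha$ which fails for random $\alpha$ with probability $O(n\lg D/q)$, and the bound $Q\ge\tfrac14(\ell nT\lg D)^2 D$ pushes the union bound well below a small constant fraction of $1/2$. For (b) I would combine a prime-number-theorem estimate with a Chernoff bound: since $|E_i-E_j|<D^n$, this difference has at most $n\lg D/\lg\lambda$ prime divisors in $[\lambda,2\lambda]$, while that interval contains $\Theta(\lambda/\ln\lambda)$ primes, so each pair collides modulo a random $p_\iota$ with probability $O(n\ln D/\lambda)$; summing over the other $T-1$ terms with $\lambda=kT$ and $k\ge 20n\ln D$ makes the per-prime collision probability at most roughly $1/20$ per term, and Chernoff concentration across the $\mu$ independent primes makes the collision fraction exceed $1/2$ with probability exponentially small in $\mu$. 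Since $\mu=\Theta(n\lg D/\lg(kT))$ with $\ell\ge 2$, a union bound over the $T$ terms absorbs this cleanly. For (c), the list $L$ has size at most $2\mu kT$, so the number of phantom coefficient values that can arise from accidental equalities of collided images is polynomial in these parameters; a random prime $q\in[Q,2Q]$ together with random $\alpha$ make each such coincidence occur with probability $O(D/Q)$, which the $Q$-bound controls in the union bound.

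The hard part will be step~(c): carefully enumerating the set of candidate phantom coefficient values and showing that the joint randomness of $q$ and $\alpha$ suffices to make all accidental coincidences rare. The quadratic factor $(\ell nT\lg D)^2$ in the hypothesis on $Q$ is calibrated precisely for this bound, essentially counting the pairs of triples in $L$ that could cause trouble. Once all three bad events are avoided, the remainder of the algorithm runs deterministically correctly: each genuine term accumulates at least $\mu/2$ surviving images whose associated primes have product at least $\lambda^{u'}\ge D^n$ once $u'\ge n\lg D/\lg\lambda$, which is at most $\mu/2$ because $\ell\ge 2$; so Chinese remaindering recovers $E_c$ uniquely, the $D$-adic expansion recovers the multivariate exponents $(e_1,\ldots,e_n)$, and $c^*=c\alpha^{-E_c}\bmod q$ recovers the true integer coefficient since $|c^*|<H\le Q/2<q/2$.
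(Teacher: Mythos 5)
Your decomposition into three bad events (diversification failures among true terms, a true term colliding in too many of the prime reductions, and phantom coefficients from collision sums crossing the $\mu/2$ threshold) has the same skeleton as the paper's proof, and your handling of (a)/(c) is in spirit the paper's appeal to Theorem 3.1 of \cite{GR11a} and Lemma 4.1 of \cite{AGR14}: the quadratic factor in the bound on $Q$ is indeed there to separate all of the at most $\mu T$ coefficient values appearing in $L$, collision sums included. The genuine gap is in your step (b). You need \emph{every} one of the $T$ terms to appear uncollided in at least $\mu/2$ of the $\mu$ primes, and you propose a Chernoff bound over the $\mu$ independent primes followed by a union bound over the $T$ terms; that only works if $\mu = \Omega(\log T)$, which the hypotheses do not provide. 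Here $\mu = \lceil \ell n \lg D / \lg \lambda\rceil$ with $\lambda = kT$, so when $\lg T$ is comparable to $n\lg D$ (e.g.\ $T$ within a polynomial factor of $D^n$), $\mu$ is a bounded constant near $\ell$ while $T$ is huge, and $T\cdot e^{-\Theta(\mu)}$ is nowhere near $1/2$. Your per-term, per-prime collision estimate of roughly $n\ln D/k \le 1/20$ is fine (this is exactly the role of $k \ge 20 n\ln D$, and the threshold $21$ comes from the explicit Chebyshev-type count of primes in $[\lambda,2\lambda]$), but the concentration-plus-union step collapses in precisely the supersparse regime the theorem targets. The paper takes a different route here: it cites Lemma 8 of \cite{AGR13}, which bounds for each single prime the probability that more than $T/3$ of the terms collide, and then argues over the $\mu$ primes in aggregate, so that the failure probability does not degrade with $T$; your proposal would need either that lemma or a stronger per-term guarantee (e.g.\ $\lambda$ scaling like $T^2$, which is the \cite{GR11a} parameter regime, not this one) to close the bound.

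A secondary inaccuracy: in (a) you call $c_i\alpha^{E_i}\equiv c_j\alpha^{E_j}\pmod q$ a ``low-degree relation in $\alpha$'' failing with probability $O(n\lg D/q)$. As an equation in $\alpha$ its degree is $|E_i-E_j|$, which can be as large as $D^n$; the number of bad $\alpha$ is governed by $\gcd(E_i-E_j,\,q-1)$, and controlling that gcd is exactly why $q$ itself must be a random prime, as the paper stresses and as Theorem 3.1 of \cite{GR11a} exploits. You do invoke the joint randomness of $q$ and $\alpha$ in (c), so this part is repairable, but as written step (a) is not justified.
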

\begin{proof}
  The condition $Q \ge 2H$ guarantees that positive or negative
  coefficients whose absolute value is at most $H$ will all be distinct
  modulo $q$, for any prime $q\ge Q$ as chosen in the algorithm.

  Because $\lambda = kT$, Lemma 8 from \cite{AGR13}
  tells us that, for each $p_i$, the probability that more than $T/3$
  terms collide modulo $z^{p_i}-1$ is at most $1/4$.
  Since the most number of terms that could collide is $T$, this means
  the \emph{expected} number of collisions, for each $p_i$, is at
  most $T/2$.

  The total number of nonzero coefficients in all polynomials $f_i$ examined on
  step~\ref{step:growL} is at most $\mu T \le \ell n T \lg D$. Many of
  these correspond to single terms in $f$ itself, but some will be
  collisions of terms in $f$. Using the reasoning of Lemma 4.1 in
  \cite{AGR14}, and the proof of Theorem 3.1 from \cite{GR11a},
  every coefficient of a single term in $f$, or a collision that appears
  in any of the $f_i$'s, is distinct modulo $q$, due to the bound on
  $Q$ and because $q \ge Q \ge \tfrac{1}{2}(\mu T)^2 \lg D$.

  We conclude that, with probability at least $1/2$, each term in $f$
  appears un-collided in at least $\mu/2$ of the polynomials $f_i$, and
  furthermore that these coefficients are all distinct and are the only
  ones that repeat in the list $L$. This guarantees that the correct
  coefficients and exponents are recovered in the final loop, and the
  algorithm outputs the correct interpolated polynomial.
\end{proof}

Applying the bounds in Theorem~\ref{thm:bounds} to the analysis in
Theorem~\ref{thm:rtime} gives the provable complexity bound for the
algorithm as stated in \eqref{eqn:rtg}.

As usual, the probability of success in either the provable or the
heuristic version of the algorithm (described below)
can be increased arbitrarily high by
running the same algorithm repeatedly and choosing the most common
polynomial returned among all runs to be the most likely candidate for
$f$.

\subsection{Heuristic approach}

The heuristic version of this procedure is simply to choose
appropriate constants for the parameters $k$ and $\ell$, with the
intuition that there can be some trade-off between the size of each prime
(governed by $k$) and the number of chosen primes (governed by $\ell$).
Furthermore, the bound on $Q$ required in theory to obtain
diversification between the coefficients in $f$ and in any collisions is
unnecessarily high for most ``typical'' polynomials. There do exist
pathological counterexamples, but they require the degrees of many terms
to be equivalent modulo $q$. As the prime $q$ is also chosen randomly in
our approach, we have a good indication that this heuristic approach
will work with high probability for a randomly-chosen sparse polynomial.
We state this as a conjecture, which is also backed by the experimental
evidence reported in Section~\ref{sec:exper} below.

\begin{claim}
  For any sufficiently large height bound $H$, and using $Q=2H$, 
  there exist constants
  $k,\ell\ge 1$ such that, for a polynomial $f\in\ZZ[x_1,\ldots,x_n]$
  chosen at random with at most $T$ nonzero terms, the probability that
  algorithm \ref{proc:interp} successfully interpolates $f$ is at least
  1/2.
\end{claim}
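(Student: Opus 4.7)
The plan is to mirror the proof structure of Theorem~\ref{thm:bounds} but to replace the two worst-case combinatorial bounds there with probabilistic bounds that exploit the assumed randomness of $f$. Theorem~\ref{thm:bounds} requires (i) $k = \Omega(n\log D)$ so that each $f_i$ has fewer than $T/3$ collisions with constant probability, and (ii) $Q = \Omega((\ell n T\log D)^2 D)$ so that the $O(\mu T)$ nonzero coefficients arising across all $\mu$ polynomials $f_i$ --- both genuine diversified terms and collision sums --- are distinct modulo $q$. The goal is to show that, for a random $f$, both can be achieved with constant $k$ and $\ell$ and with $Q = 2H$ once $H$ is sufficiently large.

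For part (i), I would first argue that when the exponent tuples $(e_{i1},\ldots,e_{in})$ of $f$ are drawn uniformly from $\{0,\ldots,D-1\}^n$, the univariate exponents $E_i$ produced by the Kronecker substitution are uniform on $[0,D^n)$, so each pairwise difference $E_i - E_j$ is essentially a uniformly random nonzero integer in $(-D^n,D^n)$. The probability that a uniform prime $p \in [kT,2kT]$ divides such a random nonzero integer is at most $1/p \le 1/(kT)$, so the expected number of colliding pairs in any $f_i$ is at most $\binom{T}{2}/(kT) < T/(2k)$. Choosing $k = 21$ and applying Markov gives at most $T/3$ collisions per $f_i$ with probability $\ge 3/4$, matching the bound invoked from Lemma~8 of \cite{AGR13}; the factor $n\log D$ needed there for arbitrary exponents is now unnecessary because it was only used to bound the number of prime divisors of a \emph{worst-case} difference $E_i - E_j$.

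For part (ii), the $O(\mu T)$ nonzero coefficients of $f_1,\ldots,f_\mu$ are either diversified singletons $S_i = c_i \alpha^{E_i} \bmod q$ or subset sums $\sum_{i \in C} S_i$ over $\ge 2$ terms that happened to collide in some $f_j$. For random $c_i \in [-H,H]$ and a random prime $q \in [2H,4H]$, each $S_i$ is essentially uniform in $\ZZ/q\ZZ$ and each nontrivial subset sum is as well, so a birthday argument gives the probability that any two of these $O(\mu T)$ values coincide modulo $q$ as $O((\mu T)^2 / q) = O((nT\log D)^2/H)$. This is below $1/4$ once $H \ge C(nT\log D)^2$ for an appropriate constant $C$, which is precisely the ``sufficiently large $H$'' hypothesis. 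A union bound over parts (i) and (ii) (together with the choice of $q$) then yields overall success probability $\ge 1/2$, with $\ell = 2$ a valid constant.

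The main obstacle is making the ``essentially uniform'' claim in part (ii) rigorous: the singletons $S_i$ and the collision sums that must be distinguished from them all share the same values of $\alpha$, $q$, and $\{c_i\}$, so they are heavily coupled. A clean argument likely requires a two-source analysis separating the randomness of the coefficients from that of $\alpha$ and $q$ (in the spirit of the diversification argument in \cite{GR11a}), or additional structural hypotheses on the random model for $f$ to rule out pathological exponent configurations where many subset sums collapse. This is exactly the gap the authors bridge with experimental rather than analytic evidence, which is why the statement is presented as a conjecture.
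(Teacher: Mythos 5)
First, a point of comparison: the paper does not prove this statement at all --- it is deliberately stated as a conjecture, and the only support the authors offer is the experimental evidence of Section~\ref{sec:exper} (where, notably, they could not make arbitrary constants work: they needed $k=38$ in general, $k=50$ for $T<1000$, and $k\ge 10000/T$ for $T<100$). So there is no paper proof to measure your proposal against; it has to stand on its own as a proof, and by your own admission it does not yet do so. Your overall strategy --- rerun the argument of Theorem~\ref{thm:bounds} replacing the worst-case bounds on collisions and on coefficient distinctness with average-case bounds over a random $f$ --- is the natural route, and your part (i) observation (for random exponents a prime $p\approx kT$ divides a given difference $E_i-E_j$ with probability about $1/p$, so the $n\log D$ factor from Lemma~8 of \cite{AGR13} is only needed for adversarial exponents) is sound as far as it goes.

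The gaps are real, however, and go beyond the one you flag. (a) In part (i), bounding the expected number of collisions in each single $f_i$ is not what correctness requires: the final loop keeps a coefficient only if it appears in at least $\mu/2$ of the images, so you need every one of the $T$ terms to survive uncollided in at least $\mu/2$ of the $\mu$ primes. With constant $k$, a fixed term collides in a fixed image with probability up to about $1/k$, so controlling the probability that \emph{some} term collides in $\mu/2$ or more images needs a concentration bound plus a union bound over $T$ terms, which forces roughly $\mu=\Omega(\log T)$ with a constant base --- and $\mu=\lceil \ell n\lg D/\lg(kT)\rceil$ with constant $\ell$ need not be that large (indeed, when $T$ is not far from $D^n$, or when $T$ is small so that $[\lambda,2\lambda]$ contains few primes, it is not; this is exactly where the experiments forced $k$ up). (b) In part (ii), besides the coupling you acknowledge, the event you must exclude is stronger than pairwise distinctness inside one $f_i$: a collision sum in some $f_j$ that happens to equal a genuine value $c_i\alpha^{E_i}\bmod q$ gets merged into that term's residue list and can corrupt the Chinese remaindering, and a spurious value recurring in $\mu/2$ images creates a ghost term; a birthday bound over the $O(\mu T)$ values is the right shape, but "essentially uniform and independent" is precisely what fails, since all these values are deterministic functions of the same $\alpha$, $q$, and $\{c_i,E_i\}$ (the diversification argument of \cite{GR11a} handles singleton-versus-collision pairs, not arbitrary pairs of subset sums at the coefficient size $q\approx 2H$). (c) Finally, the conjecture fixes no random model for $f$, and your reading of "sufficiently large $H$" as $H\ge C(nT\log D)^2$ silently reintroduces the dependence of $Q$ on $n$, $T$, $D$ that the heuristic is meant to eliminate; both hypotheses would have to be made explicit. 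In short: a plausible program, correctly identified as incomplete, with the additional unaddressed obstruction (a) that constant $k,\ell$ must still deliver the per-term $\mu/2$ survival guarantee uniformly over all $T$ terms.
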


The heuristic complexity under this conjecture is stated in \eqref{eqn:rth}.

\subsection{Example}

 Consider an unknown bivariate
 polynomial $f(x,y)$ with 3 nonzero terms and degree less
 than 10.

The primes that we are going to use are going to be small for the sake of this 
example. The primes are  7, 13, and 17.

\begin{enumerate}

 \item Now we compute $f$ modulo $(z^p_i-1)$, we get:

  \begin{align*}
    f \bmod (z^{17}-1) &= 2z^{9}+7z^{8}+3z^{3}  \\
    f \bmod (z^{13}-1) &= 10z^{7}+2z^{4}        \\
    f \bmod (z^{7}-1) &= 3z^{6}+7z^{3}+2z       \\    
  \end{align*}

  We notice here that a collision happened with the prime 13. This won't affect our 
  calculation later because resulting coefficient $10$ doesn't appear anywhere else 
  (For this example, we are going to assume that we have a good coefficient if it 
  appears twice or more).
    
  Now we use these values to fill the list $L$. Every triple in $L$
  consists of the coefficient, the prime, and the exponent.
  
  $$
  \begin{array}{c|*{8}{c|}}
    \cline{2-9}
    \text{coefficient} & 2 & 7 & 3 & 10 & 2 & 3 & 7 & 2 \\
    \cline{2-9}
    \text{exponent} & 9 & 8 & 3 & 7 & 4 & 6 & 3 & 1     \\
    \cline{2-9}
    \text{prime} & 17 & 17 & 17 & 13 & 13 & 7 & 7 & 7    \\
    \cline{2-9}
  \end{array}
  $$
    
  We then sort $L$ based on coefficient values.

  $$
  \begin{array}{c|*{8}{c|}}
    \cline{2-9}
    \text{coefficient} & 10 & 7 & 7 & 3 & 3 & 2 & 2 & 2 \\
    \cline{2-9}
    \text{exponent} & 7 & 8 & 3 & 3 & 6 & 9 & 4 & 1      \\
    \cline{2-9}
    \text{prime} & 13 & 17 & 7 & 17 & 7 & 17 & 13 & 7       \\
    \cline{2-9}
  \end{array}
  $$
  
  We use the Chinese remainder theorem to get back the exponent that corresponds 
  to each coefficient.

  $$
    \left.
    \begin{array}{r@{\ }c@{\ }l}      
      e_1 \bmod 17 &=& 8 \\
      e_1 \bmod 7 &=& 3  \\
    \end{array}
    \right\}
    \Rightarrow e_1 = 59
  $$

  $$
    \left.
    \begin{array}{r@{\ }c@{\ }l}      
      e_2 \bmod 17 &=& 3 \\
      e_2 \bmod 7 &=& 6 \\
    \end{array}
    \right\}
    \Rightarrow e_2 = 20
  $$
    
  $$
    \left.
    \begin{array}{r@{\ }c@{\ }l}
      e_3 \bmod 17 &=& 9 \\
      e_3 \bmod 13 &=& 4 \\
      e_3 \bmod 7 &=& 1 \\
    \end{array}
    \right\}
    \Rightarrow e_3 = 43
  $$
    
  This results in the univariate polynomial
  $$f(z) = 3z^{20} + 2z^{43} + 7z^{59}.$$

  Finally, inverting the Kronecker map $f(z) = f(x,y^{10})$,
  we obtain
  $$f(x,y) = 3y^2 + 2x^3y^4 + 7x^9y^5.$$

\end{enumerate}

\iftoggle{article}{\FloatBarrier}{}

\section{Parallel implementation}
\label{sec:impl}

We completed a low-level implementation of Procedure \ref{proc:interp}
written in the C programming language. Our complete source code, as well
as the exact source we tested for the comparisons and benchmark problems
listed later, is available upon request by email. 

We give a few details
here on the choices of our implementation, in particular the libraries
that were utilized.

\subsection{FLINT for sparse and dense polynomial arithmetic}

The key advantage to the ``small primes'' approach which we employed is
the reliance on fast subroutines for dense polynomial arithmetic. The
experiments we ran always used a word-sized modulus $q$, and so the most
expensive computations involved computing with dense, low-degree
polynomials with word-sized modular coefficients.

FLINT (\url{http://flintlib.org/}) is a free, open-source C library for
fast number theoretic computations \cite{flint}. Our dense polynomial
arithmetic, which is the dominating cost both in theory and in practice
in our experiments, was performed using the \verb|nmod_poly| data type.

In order to store the result of sparse interpolation and complete the
correctness testing in our experiments, we also added rudimentary
support for sparse integer polynomials on top of FLINT. We created a new
data type, \verb|fmpz_sparse|, to represent sparse univariate
polynomials in $\ZZ[x]$ for testing purposes, using FLINT's multiple
precision type \verb|fmpz| as both the coefficient and exponent storage
for sparse polynomials. 

Note that using multiple-precision integers for
exponents is especially important, as we have \emph{not} yet completed
full multivariate polynomial support within FLINT. Instead, for the
purposes of our experiments, we always used a standard Kronecker
substitution to store $n$-variate, degree-$D$ 
multivariate polynomials as univariate sparse
polynomials with degree bounded by $D^n$. Employing a multiple-precision
data type allows for the largest possible degree and number of
variables, which is crucial as it is precisely this \emph{supersparse}
case in which our approach has the greatest potential parallel speedup.

\subsection{MPI for multi-processor parallelism}
 Message Passing Interface (MPI) allows us to parallelize our algorithm. 
 As stated earlier, since the slowest part of our algorithm involves calculating 
 the unknown polynomial $f$ modulo many polynomials $(x^{p_i}-1)$, we
 use MPI to perform each of these evaluations in parallel.

 The function \verb|MPI_Init|
 is called at the beginning of the program to spawn an arbitrary number
 of processes, as specified on the command line.
 Each of the allocated processes will be
 executing separately with separate copies of all variables in the original 
 process. All processes will have unique id numbers. The root process
 will have id number 0. By knowing processes id's we can separate what each
 process executes. 
 
 We used a master-slave model for our algorithm. The master 
 process evenly distributes how many primes each slave calculates. After getting 
 the primes, the slave process will compute the following for every prime:
 $\prod_{i=1}^{m} f_i(x) \bmod (x^{p_i}-1)$, then
 traverse the resulting univariate polynomial 
 and save all nonzero terms, along with the prime $p_i$, to an array.
 This array of triples is sent back to the master process, which later
 sorts all the concatenated evaluation arrays
 and uses Chinese remaindering to recover the full polynomial
 $f$, as described above.

 While our experiments were performed on a multi-core machine, and hence
 using a simpler threading library would have also worked, our goal in
 using MPI was to demonstrate the full parallel potential of this
 approach by explicitly detailing the inter-process communication.
 Furthermore, the MPI implementation could also be used without
 modification on heterogeneous clusters or other architectures besides
 multi-core.
  
\section{Experimental results}
\label{sec:exper}

We ran our tests on a machine using an Intel(R) Core(TM) i7-3930K CPU @ 3.20GHz 
simulating 12 hyper-threaded cores on 6 physical cores, with 32 GB of RAM. 
We used a Debian GNU system, running the ``unstable'' branch, 
with Linux kernel version 3.16.0-4-amd64. This is a bleeding-edge system
with the most current versions of all software available within the
Debian repositories.

\subsection{Determining the parameters $k$ and $\ell$}

The first task was to determine experimentally what kind of settings for
the parameters $k$ and $\ell$ would be appropriate for our heuristic
interpolation method. We found that $k=38$ and $\ell=2$ worked for a
wide range of problem sizes with almost no failures in the randomized
algorithm.

The only exceptions we found here were that when the bound on the number of
terms $T$ in the output was very small, even setting $k=38$ the number
of primes $p$ in the range $[\lambda,2\lambda]$ was simply not
sufficient to ensure a high success probability. In these small-size
extremes, the value of $k$ was increased to accommodate; in particular, 
we settled on $k=50$ and
$\ell=2$ when $T < 1000$, and when $T < 100$ we had to select 
$k \ge 10000/T$. Under these parameter settings, no failures were
observed in any of our experiments.

We comment that a smaller $k$ value and a
larger $\ell$ value would be preferable, because that 
would increase the number of primes $\mu$ and
hence the potential parallel speedup
for the algorithm, while decreasing the size $\lambda$ of each prime
$p_i$. However, we found that modestly larger values for $\ell$ did not
allow for $k$ to be reduced and consistently result in correct output.
We consider the exploration of some better balance between the
parameters $k$ and $\ell$ as future work.

\subsection{Single-threaded performance}

\begin{table*}[tbp]
\iftoggle{article}{\begin{adjustbox}{center}}{\begin{center}}

\begin{tabular}{*{6}{r}*{3}{|r}}
\multicolumn{1}{c}{$m$} &
\multicolumn{1}{c}{variables} &
\multicolumn{1}{c}{terms} &
\multicolumn{1}{c}{max-degree} &
\multicolumn{1}{c}{$\mu$} &
\multicolumn{1}{c}{$\lambda$} &
\multicolumn{1}{|c|}{Mathemagix} &
\multicolumn{1}{|c|}{Ours (single thread)} &
\multicolumn{1}{|c}{Ours (12 threads)} \\
\hline
1 & 20 & 3    & 40  & 14 & 50\,000  & 0.078 & 0.035 & 0.029 \\
2 & 20 & 9    & 80  & 18 & 5\,000   & 0.155 & 0.151 & 0.048 \\
3 & 20 & 27   & 120 & 18 & 6\,900   & 0.305 & 0.329 & 0.116 \\
4 & 20 & 81   & 160 & 18 & 4\,900   & 0.598 & 0.323 & 0.085 \\
5 & 20 & 243  & 200 & 17 & 9\,900   & 2.156 & 0.785 & 0.175 \\
6 & 20 & 729  & 240 & 15 & 39\,900  & 5.053 & 3.084 & 0.814 \\
7 & 20 & 2187 & 280 & 14 & 80\,050  & 13.333 & 8.714 & 2.225 \\
8 & 20 & 6561 & 320 & 13 & 321\,300 & 41.070 & 43.911 & 10.605 \\
\end{tabular}

\iftoggle{article}{\end{adjustbox}}{\end{center}}
\caption{Sparse interpolation benchmark times (in
seconds)\label{tab:comp}}
\end{table*}

\begin{figure}[tbp]
\includegraphics[width=\linewidth]{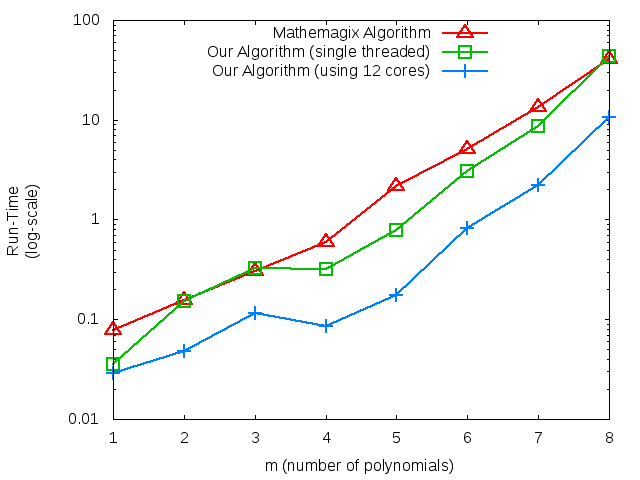}
\caption{Comparison With Mathemagix Sparse Interpolation Program\label{fig:comparison}}
\end{figure}

The first experiment was to test our algorithm 
without parallelization against
the efficient ``big primes'' implementation in Mathemagix 
(\url{http://www.mathemagix.org/}), as
reported in \cite{HL15}.
We downloaded the source code for the Mathemagix sparse
interpolation program, then ran it with $m=1, 2, 3, 4, 5, 6, 7, 8$ polynomials 
being multiplied. We kept each polynomial at 20 variables, with degree 40, 3 terms 
and coefficients up to $2^{30}$. Then we ran our algorithm with the same parameters. 
The results are shown in Taible~\ref{tab:comp} and summarized in
Figure~\ref{fig:comparison}. 
(We also
attempted a comparison against the \texttt{sinterp} function in Maple 2015,
but it was more than an order of magnitude slower in all experiments.)

Our non-parallelized algorithm seems to be on par with the Mathemagix
implementation for this range of problem sizes. As a comparison, and to
emphasize the main point of our paper, we also show the full parallel
speedup for the same problems in Figure~\ref{fig:comparison}. 

\subsection{Parallel speedup}

\begin{figure}[tbp]
\includegraphics[width=\linewidth]{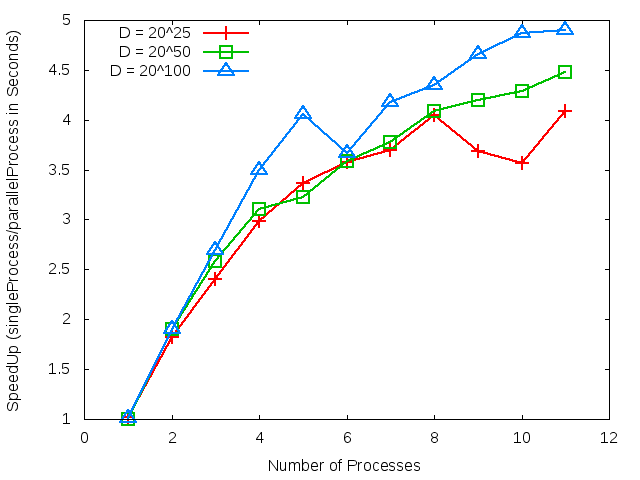}
\caption{Parallel speedup for our implementation\label{fig:speedup}}
\end{figure}

The second experiment was to test the parallel speedup for our implementation by 
varying the degree size $D$ and leaving $k=38$, $l=2$, partial $f_i$, and $m=6$ constant. $D$
was varied from $20^{25}$, $20^{50}$, $20^{100}$. This test was performed 3 separate times 
and the resultant data was calculated from median of the three trials. 
The results are seen in Figure~\ref{fig:speedup}. 

Figure~\ref{fig:speedup} shows a linear speedup increase as the number of 
parallel processes used gets closer to the physical number of cores on the machine.
Additionally, we can see the most significant parallel speedup occurs for the highest
degree tested. Recall that on our machine, there are only 6 physical
cores that are hyper-threaded to 12 virtual cores. Furthermore, when
running only six threads, the ``turbo mode'' clock rate is increased to
3.8GHz. This may help to
explain the dip in performance seen around $m=6$ parallel processes.

Observe also that the parallel speedup is best, and continues the
furthest, in the most extremely sparse case with very high degree.

Our parallel speedup is demonstrated in Figure~\ref{fig:speedup}.

\section{Conclusions and future work}
\label{sec:conc}

We have shown that the ``small primes'' sparse interpolation algorithm
is competitive with the state of the art, even without parallelization,
especially for very sparse problem instances. Furthermore, there is
greater potential for parallelism in the small primes technique. These
theoretical results are borne out in practice in our experimental results
compared to other available software implementations.

There is significantly more work to be done, however, before we might
suggest widespread adoption of our heuristic sparse interpolation
method. We would like to understand the theory behind the heuristic
approach in order to have a less \emph{ad hoc} way of determining the
parameters $k$ and $\ell$. On the other hand, our implementation could
be greatly enhanced with further experimentation on a wider range of
benchmark problems and incorporating true multivariate sparse polynomial
representations.

\section*{Acknowledgments}
This work was supported by the National Science Foundation under
award number \#1319994. We thank Andrew Arnold and the anonymous PASCO
2015 reviewers for their comments on an earlier draft of this paper.

\renewcommand{\bibpreamble}{\addcontentsline{toc}{section}{References}}
\bibliographystyle{plainnat}

\begin{thebibliography}{21}
\providecommand{\natexlab}[1]{#1}
\providecommand{\url}[1]{\texttt{#1}}
\expandafter\ifx\csname urlstyle\endcsname\relax
  \providecommand{\doi}[1]{doi: #1}\else
  \providecommand{\doi}{doi: \begingroup \urlstyle{rm}\Url}\fi

\bibitem[Arnold et~al.(2013)Arnold, Giesbrecht, and Roche]{AGR13}
Andrew Arnold, Mark Giesbrecht, and Daniel~S. Roche.
\newblock Faster sparse interpolation of straight-line programs.
\newblock In Vladimir~P. Gerdt, Wolfram Koepf, Ernst~W. Mayr, and Evgenii~V.
  Vorozhtsov, editors, \emph{Proc. Computer Algebra in Scientific Computing
  (CASC 2013)}, volume 8136 of \emph{Lecture Notes in Computer Science}, pages
  61--74. Springer, September 2013.
\newblock \doi{10.1007/978-3-319-02297-0_5}.

\bibitem[Arnold et~al.(2014)Arnold, Giesbrecht, and Roche]{AGR14}
Andrew Arnold, Mark Giesbrecht, and Daniel~S. Roche.
\newblock Sparse interpolation over finite fields via low-order roots of unity.
\newblock In \emph{Proceedings of the 39th International Symposium on Symbolic
  and Algebraic Computation}, ISSAC '14, pages 27--34, New York, NY, USA, 2014.
  ACM.
\newblock \doi{10.1145/2608628.2608671}.

\bibitem[Arnold et~al.(2015)Arnold, Giesbrecht, and Roche]{AGR15}
Andrew Arnold, Mark Giesbrecht, and Daniel~S. Roche.
\newblock Faster sparse multivariate polynomial interpolation of straight-line
  programs.
\newblock \emph{CoRR}, abs/1412.4088, 2015.
\newblock URL \url{http://arxiv.org/abs/1412.4088}.

\bibitem[Avenda{\~n}o et~al.(2006)Avenda{\~n}o, Krick, and Pacetti]{AKP06}
Mart{\'{\i}}n Avenda{\~n}o, Teresa Krick, and Ariel Pacetti.
\newblock Newton-{H}ensel interpolation lifting.
\newblock \emph{Found. Comput. Math.}, 6\penalty0 (1):\penalty0 81--120, 2006.
\newblock ISSN 1615-3375.
\newblock \doi{10.1007/s10208-005-0172-3}.

\bibitem[Ben-Or and Tiwari(1988)]{BT88}
Michael Ben-Or and Prasoon Tiwari.
\newblock A deterministic algorithm for sparse multivariate polynomial
  interpolation.
\newblock In \emph{Proceedings of the twentieth annual ACM symposium on Theory
  of computing}, STOC '88, pages 301--309, New York, NY, USA, 1988. ACM.
\newblock ISBN 0-89791-264-0.
\newblock \doi{10.1145/62212.62241}.

\bibitem[Blahut(1979)]{Bla79}
R.E. Blahut.
\newblock Transform techniques for error control codes.
\newblock \emph{IBM Journal of Research and Development}, 23\penalty0
  (3):\penalty0 299--315, May 1979.
\newblock \doi{10.1147/rd.233.0299}.

\bibitem[Boyer et~al.(2012)Boyer, Comer, and Kaltofen]{BCK12}
Brice Boyer, Matthew~T. Comer, and Erich~L. Kaltofen.
\newblock Sparse polynomial interpolation by variable shift in the presence of
  noise and outliers in the evaluations.
\newblock In \emph{Electr. Proc. Tenth Asian Symposium on Computer Mathematics
  (ASCM 2012)}, 2012.

\bibitem[Cuyt and Lee(2008)]{CL08}
Annie Cuyt and {Wen-shin} Lee.
\newblock A new algorithm for sparse interpolation of multivariate polynomials.
\newblock \emph{Theoretical Computer Science}, 409\penalty0 (2):\penalty0
  180--185, 2008.
\newblock ISSN 0304-3975.
\newblock \doi{10.1016/j.tcs.2008.09.002}.
\newblock Symbolic-Numerical Computations.

\bibitem[Garg and Schost(2009)]{GS09}
Sanchit Garg and {\'E}ric Schost.
\newblock Interpolation of polynomials given by straight-line programs.
\newblock \emph{Theoretical Computer Science}, 410\penalty0 (27-29):\penalty0
  2659--2662, 2009.
\newblock ISSN 0304-3975.
\newblock \doi{10.1016/j.tcs.2009.03.030}.

\bibitem[Giesbrecht and Roche(2011)]{GR11a}
Mark Giesbrecht and Daniel~S. Roche.
\newblock Diversification improves interpolation.
\newblock In \emph{Proceedings of the 36th international symposium on Symbolic
  and algebraic computation}, ISSAC '11, pages 123--130, New York, NY, USA,
  2011. ACM.
\newblock ISBN 978-1-4503-0675-1.
\newblock \doi{10.1145/1993886.1993909}.

\bibitem[Grenet et~al.(2015)Grenet, \Hoeven{van der Hoeven}, and Lecerf]{GHL15}
Bruno Grenet, Joris \Hoeven{van der Hoeven}, and Gr{\'e}goire Lecerf.
\newblock Randomized root finding over finite {FFT}-fields using tangent
  {G}raeffe transforms.
\newblock In \emph{Proc. 40th International Symposium on Symbolic and Algebraic
  Computation}, ISSAC '15, page to appear, 2015.

\bibitem[Hart et~al.(2013)Hart, Johansson, and Pancratz]{flint}
W.~Hart, F.~Johansson, and S.~Pancratz.
\newblock {FLINT}: {F}ast {L}ibrary for {N}umber {T}heory, 2013.
\newblock Version 2.4.0, \url{http://flintlib.org}.

\bibitem[\Hoeven{van der Hoeven} and Lecerf(2015)]{HL15}
Joris \Hoeven{van der Hoeven} and Gr{\'e}goire Lecerf.
\newblock Sparse polynomial interpolation in practice.
\newblock \emph{ACM Commun. Comput. Algebra}, 48\penalty0 (3/4):\penalty0
  187--191, February 2015.
\newblock \doi{10.1145/2733693.2733721}.

\bibitem[Javadi and Monagan(2010)]{JM10}
Seyed Mohammad~Mahdi Javadi and Michael Monagan.
\newblock Parallel sparse polynomial interpolation over finite fields.
\newblock In \emph{Proceedings of the 4th International Workshop on Parallel
  and Symbolic Computation}, PASCO '10, pages 160--168, New York, NY, USA,
  2010. ACM.
\newblock ISBN 978-1-4503-0067-4.
\newblock \doi{10.1145/1837210.1837233}.

\bibitem[Kaltofen and Lee(2003)]{KL03}
Erich Kaltofen and {Wen-shin} Lee.
\newblock Early termination in sparse interpolation algorithms.
\newblock \emph{Journal of Symbolic Computation}, 36\penalty0 (3-4):\penalty0
  365--400, 2003.
\newblock ISSN 0747-7171.
\newblock \doi{10.1016/S0747-7171(03)00088-9}.
\newblock ISSAC 2002.

\bibitem[Kaltofen and Yagati(1989)]{KY89}
Erich Kaltofen and Lakshman Yagati.
\newblock Improved sparse multivariate polynomial interpolation algorithms.
\newblock In P.~Gianni, editor, \emph{Symbolic and Algebraic Computation},
  volume 358 of \emph{Lecture Notes in Computer Science}, pages 467--474.
  Springer Berlin / Heidelberg, 1989.
\newblock \doi{10.1007/3-540-51084-2_44}.

\bibitem[Kaltofen et~al.(1990)Kaltofen, Lakshman, and Wiley]{KLW90}
Erich Kaltofen, Y.~N. Lakshman, and John-Michael Wiley.
\newblock Modular rational sparse multivariate polynomial interpolation.
\newblock In \emph{Proceedings of the international symposium on Symbolic and
  algebraic computation}, ISSAC '90, pages 135--139, New York, NY, USA, 1990.
  ACM.
\newblock ISBN 0-201-54892-5.
\newblock \doi{10.1145/96877.96912}.

\bibitem[Kaltofen(2010)]{Kal10a}
Erich~L. Kaltofen.
\newblock Fifteen years after {DSC} and {WLSS2}: {W}hat parallel computations
  {I} do today [invited lecture at {PASCO} 2010].
\newblock In \emph{Proceedings of the 4th International Workshop on Parallel
  and Symbolic Computation}, PASCO '10, pages 10--17, New York, NY, USA, 2010.
  ACM.
\newblock ISBN 978-1-4503-0067-4.
\newblock \doi{10.1145/1837210.1837213}.

\bibitem[Kaltofen et~al.(2011)Kaltofen, Lee, and Yang]{KLY11}
Erich~L. Kaltofen, Wen-shin Lee, and Zhengfeng Yang.
\newblock Fast estimates of hankel matrix condition numbers and numeric sparse
  interpolation.
\newblock In \emph{Proceedings of the 2011 International Workshop on
  Symbolic-Numeric Computation}, SNC '11, pages 130--136, New York, NY, USA,
  2011. ACM.
\newblock ISBN 978-1-4503-0515-0.
\newblock \doi{10.1145/2331684.2331704}.

\bibitem[Mansour(1995)]{Man95}
Yishay Mansour.
\newblock Randomized interpolation and approximation of sparse polynomials.
\newblock \emph{SIAM Journal on Computing}, 24\penalty0 (2):\penalty0 357--368,
  1995.
\newblock \doi{10.1137/S0097539792239291}.

\bibitem[Zippel(1990)]{Zip90}
Richard Zippel.
\newblock Interpolating polynomials from their values.
\newblock \emph{Journal of Symbolic Computation}, 9\penalty0 (3):\penalty0
  375--403, 1990.
\newblock ISSN 0747-7171.
\newblock \doi{10.1016/S0747-7171(08)80018-1}.
\newblock Computational algebraic complexity editorial.

\end{thebibliography}

\newcommand{\Gathen}{\relax}\newcommand{\Hoeven}{\relax}

\end{document}